\documentclass[11pt]{article}

\usepackage{amsfonts}
\usepackage{amsmath}
\usepackage{graphicx}
\usepackage{hyperref}
\usepackage{amssymb}
\usepackage{amsopn,ntheorem}
\usepackage{comment}
\usepackage[usenames,dvipsnames]{color}

\usepackage[a4paper]{geometry}
\geometry{top=1.2in, bottom=1.5in, left=1.2in, right=1.2in}

\newtheorem{theorem}{Theorem}
\newtheorem{lemma}[theorem]{Lemma}

\newtheorem{proposition}[theorem]{Proposition}

\newtheorem{corollary}[theorem]{Corollary}
\newtheorem{definition}[theorem]{Definition}
\newtheorem{example}[theorem]{Example}

\newenvironment{proof}{{\noindent\emph{Proof:}}}{$\hfill\Box$\vspace{.1in}}

\def\P{\mbox{\rm{P}}}
\def\poly{\mbox{\rm{poly}}}

\def\NP{\mbox{\rm{NP}}}
\def\coNP{\mbox{\rm{coNP}}}

\def\Pr{\mbox{\rm{Pr}}}


\newcommand{\B}{\{0,1\}}



\begin{document}
\title{The Value of Help Bits in Randomized and Average-Case Complexity}

\author{Salman Beigi$^1$, Omid Etesami$^1$, and Amin Gohari$^{1,2}$
\\ \emph{\small $^1$School of Mathematics, Institute for Research in Fundamental Sciences (IPM), Tehran, Iran}\\
\emph{\small $^2$Department of Electrical Engineering, Sharif University of Technology, Tehran, Iran}}

\date{July 6, 2014}

\maketitle

\begin{abstract}
``Help bits" are some limited trusted information about an instance or instances of a computational problem
that may reduce the computational complexity of solving that instance or instances.
In this paper, we study the value of help bits in the settings of randomized and average-case complexity.

Amir, Beigel, and Gasarch (1990) show that for constant $k$, if $k$ instances of a decision problem can be efficiently solved using less than $k$ bits of help, then the problem is in $\P/\poly$.
We extend this result to the setting of randomized computation: We show that the decision problem is in $\P/\poly$ if
using $\ell$ help bits, $k$ instances of the problem can be efficiently solved with probability greater than $2^{\ell-k}$.
The same result holds if using less than $k(1 - h(\alpha))$ help bits (where $h(\cdot)$ is the binary entropy function), we can efficiently solve $(1-\alpha)$ fraction of the instances correctly with non-vanishing probability.
We also extend these two results to non-constant but logarithmic $k$. In this case 
however, instead of showing that the problem is in $\P/\poly$ we show that it satisfies
``$k$-membership comparability,"
a notion known to be related to solving $k$ instances using less than $k$ bits of help.

Next we consider the setting of average-case complexity:
Assume that we can solve $k$ instances of a decision problem 
using some help bits whose entropy is less than $k$ 
when the $k$ instances are drawn independently from a particular distribution.
Then we can efficiently solve an instance drawn from that distribution with probability better than $1/2$.

Finally, we show that in the case where $k$ is
super-logarithmic, assuming $k$-membership comparability of a decision problem, one cannot prove that the problem is in $\P/\poly$ by a
``black-box proof."
\end{abstract}


\section{Introduction}

In computational complexity theory, ``advice'' can be thought of as an extra input (in addition to the input instance) to algorithms that try to solve a computational problem. Advice depends only on the size of the input instance and captures the non-uniform model of computation. In this paper, we are interested in an extra input  that unlike advice depends on the particular instance of the problem to be solved. Such extra inputs (given as a string of bits) are called ``help bits" following \cite{Cai}.
Thus help bits are inputs of an algorithm
that similar to advice are generated by a \emph{trusted} computationally-unbounded party, but unlike advice depend on the instance of the problem and not only on its size.
We call the trusted party who provides us with the help bits the ``helper."

Help bits can be understood from the point of view of the oracle model as well.
Given an instance of the problem, we may ask an oracle about the help bits corresponding to that instance and then try to solve it.
Indeed, for $k$ bits of help, $k$ yes/no queries from an oracle are enough to find the help bits.
Here the oracle plays the role of the helper. Nevertheless, oracles can be used adaptively: A query to an oracle may depend on the answers to previous queries, but help bits cannot be requested adaptively and are given only at the beginning of the algorithm. This difference should be more carefully considered in the probabilistic as well as non-deterministic models of computation. The other issue here is the length of the help bits. For $k$ help bits, we should limit the use of the oracle to $k$ yes/no queries and not (say) polynomially many queries.

Any decision problem can be efficiently solved with one help bit
since we may ask the helper to give us the yes/no answer to the decision problem through that one bit of help.
Therefore, to obtain meaningful questions we consider the power of \emph{less than} one help bit per instance.
For example one may ask: Can we simultaneously solve $k$ instances of an NP-hard decision problem in polynomial time with $k-1$ help bits? That is, can we design this $k-1$ help bits cleverly in such a way that all the $k$ instances can be solved efficiently?

Suppose that there is an efficient algorithm that given $k$ instances $x_1,\dots, x_k$ of a decision problem $L$ and $\ell<k$ help bits, can output the bit-string $(L(x_1),\dots, L(x_k))$ where $L(x_i)=1$ if $x_i$ is in the language and $L(x_i)=0$ otherwise. The question is what can we do without having access to the true help bits? To this end,
assuming that $\ell$ is constant (or at most logarithmic), we may enumerate over all  $2^{\ell}$ possible help bits, and feed them to the algorithm. For each of these $2^{\ell}$ possible help bits, the output of the algorithm would be a bit-string of size $k$. Then we obtain a set of size at most $2^{\ell}$ which contains $(L(x_1),\dots, L(x_k))$. Equivalently, without having access to the true help bits, we have an efficient algorithm to reject $2^{k}-2^\ell$ possibilities for $(L(x_1),\dots, L(x_k))$.

The notion of \emph{$k$-membership comparability} defined in~\cite{Ogihara} is related to the above observation.
A decision problem is called $k$-membership comparable if there exists an efficient algorithm that for every $k$ instances $x_1,\dots, x_k$ computes $(b_1,\dots, b_k)$ such that $(L(x_1),\dots, L(x_k))\neq (b_1,\dots, b_k)$. Now we may ask the same question about $k$-member comparable sets: How hard can a $k$-membership comparable language be?

\vspace{.17in}
\noindent
\textbf{Previous works:}
Amir, Beigel and Gasarch~\cite{ABGconference, ABG}, though using the terminology of the bounded query model discussed above, have studied the above questions.\footnote{Here we should mention that before the work of Amir, Beigel and Gasarch, bounded query model had been used to classify functions (see e.g., \cite{Krentel}, \cite{Gasarch}, and \cite{GKR} for queries to NP oracles) and also had been studied in the theory of computability (see e.g., Beigel's thesis \cite{BeigelThesis}).}
They show that for constant $k$,
if $k$ instances of a decision problem can be solved efficiently using only $k-1$ help bits,
then the decision problem is in the class $\P/\poly$. Their result holds even with the weaker assumption of $k$-membership comparability, i.e., for constant $k$, a $k$-membership comparable set is in $\P/\poly$.

The authors of \cite{ABG} also show that for $k$ polynomial in the size of input instances,
$k$-membership comparable languages are in $\NP/\poly \cap \coNP/\poly$.\footnote{
We note that
in the presentation of \cite{BFP} of the results of \cite{ABG},
it is said that $k$-membership comparable languages for polynomial $k$ are in $\Sigma_2/\poly$.
However, Theorem 4.4 of \cite{ABG} implies the stronger result that the problem is in $\NP/\poly \cap \coNP/\poly$.
This stronger result in particular shows that NP-hard decision problems cannot be $k$-membership comparable for polynomial $k$ unless the polynomial hierarchy collapses.}

It is also shown in \cite{ABG} that for constant $k$,
if $2^k$ instances of a self-reducible decision problem can be efficiently solved using $k$ help bits, then the original decision problem is in the class $\P$.\footnote{This latter result in particular applies to NP-complete problems since SAT is self-reducible.
It also applies to some problems that are probably not NP-complete such as the decision version of integer factoring which is self-reducible,
and also to graph isomorphism which is equivalent to a self-reducible problem.}

Ogihara \cite{Ogihara}, Beigel, Kummer, and Stephan \cite{BKS}, and Agrawal and Arvind \cite{AA} show that if SAT is $k$-membership comparable for $k = c \lg_2 n$ where $c < 1$, then NP = P. Moreover, Sivakumar~\cite{Sivakumar} shows that if SAT is $O(\lg n)$-membership comparable, then NP = RP by solving Unique-SAT in deterministic polynomial time.

Classes of decision problems related to $k$-membership comparable problems include P-selective sets and verbose sets.
The class of P-selective sets is a sub-class of $2$-membership comparable languages, and
the notion of a verbose set is a generalization of $k$-membership comparability. See, for example,
\cite{OT} and \cite{Kobler} for some results on the power of oracles for such sets.

The computational value of help bits have  been considered in other models of computation as well, e.g., see \cite{NRS} and \cite{BH} for the decision tree model and \cite{Cai} for bounded depth circuits.

\vspace{.17in}
\noindent
\textbf{Our results:}
In the following we briefly discuss our contributions on the computational value of help bits in the two models of randomized and average-case computation.
We also discuss $k$-membership comparable problems for $k$ that is super-logarithmic.

\vspace{.17in}
\noindent
\textbf{Randomized complexity:}
Having $\ell$ help bits, we can always solve $k$ instances of a decision problem with probability at least $2^{\ell-k}$. This is simply because the $\ell$ help bits could be chosen to be the
answers to the first $\ell$ instances and we guess the answer of the rest of the instances randomly.
Our first result (Theorem~\ref{Thm:Random}) says that if for $k = O(\lg n)$, given $\ell$ help bits we can solve $k$ instances of size $n$ of a decision problem with probability non-negligibly better than $2^{\ell-k}$,
then the language is $k$-membership comparable.

Our next result (Theorem~\ref{Thm:RateDistortion}) is related to the classical problem of \emph{rate distortion} in information theory.
Suppose that instead of demanding correct answers for all of the $k$ instances, we only require that $(1-\alpha)$ fraction of our answers to the $k$ instances be correct. However the solver need not know which of the answers are correct. For instance, if the solver simply randomly guesses the answer to the $k$ instances, he can solve about $1/2$ of the instances with high probability for large enough $k$, without knowing exactly which of his guesses are correct. The question is how much better we can do with limited help bits.
	 By employing the rate distortion theory, it is possible to get the correct answers for at least $(1-\alpha)$ fraction of the $k$ instances with about $k(1 - h(\alpha))$ help bits, where $h(\cdot)$ is the binary entropy function.\footnote{The rate-distortion results   tells us that \emph{with high probability} $(1-\alpha)$ fraction of correct answers can be found with about $k(1 - h(\alpha))$ help bits.
Nonetheless, via a more refined argument, we show that for our particular case of interest, the same result can be obtained with probability one. Further this can be done to some degree efficiently.}
Our main result here is the converse, showing that for logarithmic $k$, unless the decision problem is $k$-membership comparable, we cannot solve $(1-\alpha)$ fraction of $k$ instances with fewer help bits, with any constant (independent of $k$) positive probability.

\vspace{.17in}
\noindent
\textbf{Average-case complexity:}
A computational problem may be easier on some instances compared to other instances.
Therefore, we may allow that harder instances receive more help,
e.g., harder instances may receive longer help bits.
But we still want to limit the average amount of information that the help bits may contain.
To this end we require that the help bits have limited information-theoretic \emph{entropy}.

For the definition of entropy to make sense, we need to have a distribution on the instances.
This way, we are in the realm of average-case complexity of the decision problem.
Our result here (Theorem~\ref{thm:mutualinformation}) says that if we start with a distribution on instances
such that one cannot guess the answer to an instance efficiently with probability quite more than $1/2$,
then we cannot solve $k$ instances efficiently with help bits whose entropy is less than $k$.
More precisely, 
we show that if $1/2 + \delta$ is the success probability of the best efficient algorithm on an average instance,
then minimum-entropy help bits for efficiently solving $k$ instances of the problem have between $k (1 - \Theta(\delta)) +  O(\lg k)$ and $k(1 - \Theta(\delta^2))$ bits of entropy.

\vspace{.17in}
\noindent
\textbf{$k$-Membership comparability for large $k$:}
Consider the predicate
\begin{center}
$\mathcal P(k)=$``If a decision problem is $k$-membership comparable, then it is in $\P/\poly$."
\end{center}
As discussed above $\mathcal P(k)$ it true for constant $k$ by \cite{ABG}. Here we first note (in Example~\ref{ex:superpolynomialk}) that $\mathcal P(k)$ is not true for super-polynomial $k$.
Second, we show (in Theorem~\ref{th:blackbox}) that for super-logarithmic $k$, we cannot prove $\mathcal P(k)$ in a ``black-box" manner, i.e., there is no proof of $\mathcal P(k)$ for super-logarithmic $k$ in which the algorithm that excludes a $k$-tuple for answers is used as a black-box. We emphasize that the proofs of all the previous results mentioned above have this black-box form.

This latter result is interesting since
\cite{Ogihara} first claimed that the proof of $\mathcal P(k)$ for constant $k$ by \cite{ABG} can be generalized to polynomial $k$,
but later (according to \cite{BFP}) retracted the claim.
Now, using our result, we know that the proof of \cite{ABG} did not generalize to polynomial $k$ for the stronger reason that it was a black-box proof.

\section{Preliminaries}
We begin by a formal definition of $k$-membership comparability.
This notion is the main topic of discussion in Section~\ref{Sec:membership-comparability}.
It is also used in the results of
Sections~\ref{Sec:randomized}.

For a decision problem $L$ and instance $x$ we let $L(x)=1$ if $x$ is in $L$, and $L(x)=0$ otherwise.

\begin{definition}[$k$-membership comparable sets]
A language $L$ is $k$-membership comparable if
given $k$ instances $(x_1, \ldots, x_k)$ each of size $n$,
we can find in time polynomial in $n$ and $k$, a bit-string $(b_1, \ldots, b_k) \in \{0,1\}^k$ such that
$(b_1, \ldots, b_k) \ne (L(x_1), \ldots, L(x_k)).$

The language $L$ is called non-uniform $k$-membership comparable
if the algorithm that finds $(b_1, \ldots, b_k)$ is allowed to be non-uniform,
i.e., it can receive polynomial advice.

The language $L$ is called randomized $k$-membership comparable
if given any $k$ instances $(x_1, \ldots, x_k)$ each of size $n$,
we can find in time polynomial in $n$ and $k$, a bit string $(b_1, \ldots, b_k) \in \{0,1\}^k$ such that
$$\Pr\big[(b_1, \ldots, b_k) \ne (L(x_1), \ldots, L(x_k))\big]  \ge (1 - 2^{-k})+1/{n^{O(1)}},$$
where the probability is only over the randomness of the algorithm.
\end{definition}
The notion of $k$-membership comparability is not new to this paper,
but the non-uniform and randomized versions of $k$-membership comparability had not appeared before.

Note that choosing a random vector $(b_1, \ldots, b_k)$, it is unequal to $(L(x_1), \ldots, L(x_k))$ with probability $1-2^{-k}$.
The definition of randomized $k$-membership comparability asks that we beat this probability with a non-negligible advantage.
Note also that according to our definition, randomized $k$-membership comparability is possible only when $k = O(\lg n)$.

In \cite{ABG}, it is proved that $k$-membership comparable problems, for constant $k$, are in $\P/\poly$.
Looking at their proof, it is clear that their result is true for non-uniform $k$-membership comparable problems as well:

\begin{theorem}[\cite{ABG}]
\label{ABGtheorem}
If a decision problem is non-uniform $k$-membership comparable for constant $k$,
then it is in $\P/\poly$.
\end{theorem}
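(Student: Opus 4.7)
The plan is to observe that the proof of Amir, Beigel, and Gasarch (ABG) for the uniform case treats the membership-comparability algorithm purely as a black box, so the non-uniform extension reduces to passing an additional polynomial-size advice string through the argument.

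The ABG argument proceeds by induction on $k$, with a trivial base case $k=1$: if for every instance $x$ one can output in polynomial time a bit different from $L(x)$, then $L(x)$ itself is determined. The inductive step takes a $k$-membership comparability algorithm $A$ and, by examining the output patterns of $A$ on tuples that include a small collection of reference instances of length $n$, either builds a $\P/\poly$ procedure for $L$ directly (using the $L$-values of those reference instances as the $\P/\poly$ advice) or extracts a $(k-1)$-membership comparability algorithm to which the inductive hypothesis applies. Throughout the argument, $A$ is queried only as a black box on explicit tuples; its internal behavior is never inspected.

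For the non-uniform setting, let $\alpha_n$ denote the polynomial-size advice consumed by the assumed non-uniform comparability algorithm $A$ on inputs of length $n$. I would rerun the ABG argument verbatim, with each black-box call to $A$ replaced by a call to $A$ supplied with $\alpha_n$. The resulting $\P/\poly$ algorithm for $L$ then uses advice consisting of (i) the structural advice produced at each level of the ABG induction (polynomial in $n$, since the constant $k$ bounds both the recursion depth and the blow-up per level) together with (ii) the comparability advice $\alpha_n$. Both are polynomial in $n$, so the combined advice is polynomial and $L \in \P/\poly$.

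The main thing to verify is that every case distinction in the ABG argument is based on the existence or non-existence of certain reference instances rather than on the uniformity of $A$, so that the right case can be selected by the non-uniform advice at each length $n$. Granting this, which is the case in their proof, no new technical obstacle arises; the extension is essentially a bookkeeping exercise on top of the original argument.
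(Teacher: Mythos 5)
Your proposal matches the paper exactly: the paper gives no independent proof of this theorem, but simply cites \cite{ABG} and observes that, since their argument uses the membership-comparability algorithm only as a black box, it goes through verbatim when that algorithm is additionally given polynomial advice (the paper later reiterates that the ABG proof for constant $k$ is a black-box proof). Your added sketch of the ABG induction and the bookkeeping of the combined advice is consistent with this and introduces no gap.
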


In the following lemma,
we show that randomized $k$-membership comparable problems are non-uniform $k$-membership comparable.

\begin{lemma}\label{ABGprob}
Let $L$ be a decision problem.
If $L$ is randomized $k$-membership comparable,
then it is non-uniform $k$-membership comparable.
\end{lemma}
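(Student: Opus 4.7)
The plan is to apply the probabilistic method to derandomize the given algorithm, using a polynomial list of random strings as the non-uniform advice. Let $A$ be the randomized algorithm witnessing that $L$ is randomized $k$-membership comparable, so for every input $x=(x_1,\ldots,x_k)$ of total size $kn$,
$$\Pr_r\big[A(x;r)=c(x)\big]\;\le\;2^{-k}-\epsilon,$$
where $c(x):=(L(x_1),\ldots,L(x_k))$ and $\epsilon=1/n^{O(1)}$. I would pick independent random strings $r_1,\ldots,r_m$ (with $m$ polynomial in $n$, to be fixed below) as advice, and have the deterministic algorithm simulate $A$ on each $r_i$ and output the \emph{mode} of the multiset $\{A(x;r_1),\ldots,A(x;r_m)\}\subseteq\{0,1\}^k$.

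The key deterministic observation is a pigeonhole: if the empirical frequency of $c(x)$ inside this multiset is \emph{strictly} below $2^{-k}$, then the remaining $2^k-1$ values of $\{0,1\}^k$ together carry total frequency strictly above $1-2^{-k}$, so their average frequency is strictly greater than $2^{-k}$. In particular some value other than $c(x)$ is strictly more frequent than $c(x)$, so the mode differs from $c(x)$, which is exactly what non-uniform $k$-membership comparability requires.

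It therefore suffices to choose the advice so that, for \emph{every} input $x$ of total size $kn$, the empirical frequency of $c(x)$ in $\{A(x;r_i)\}_{i=1}^m$ stays below $2^{-k}$. For a fixed $x$, the indicators $\mathbf 1[A(x;r_i)=c(x)]$ are i.i.d.\ Bernoulli variables with mean at most $2^{-k}-\epsilon$, so by Hoeffding's inequality the probability that their average reaches $2^{-k}$ is at most $\exp(-2m\epsilon^2)$. A union bound over the at most $2^{kn}$ possible inputs gives total failure probability $\le 2^{kn}\exp(-2m\epsilon^2)$. Since randomized $k$-membership comparability requires $k=O(\lg n)$ and $\epsilon=1/n^{O(1)}$, this quantity is $<1$ for a suitable $m=\mathrm{poly}(n)$; so by the probabilistic method there exist fixed $r_1^\ast,\ldots,r_m^\ast$, of polynomial total length, that work simultaneously for all inputs and can be hard-wired as advice.

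I do not foresee a real obstacle here; the only step that requires mild care is balancing the Hoeffding tail against the $2^{kn}$ union bound, which is precisely where the condition $k=O(\lg n)$ built into the definition of randomized $k$-membership comparability gets used.
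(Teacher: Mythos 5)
Your proposal is correct and matches the paper's own proof: iterate the randomized algorithm polynomially many times, take the most frequent output, bound the failure probability per input by a Chernoff/Hoeffding tail, union-bound over the $2^{kn}$ inputs, and hardwire the successful randomness as advice. The only difference is that you spell out the pigeonhole step (why the mode differs from the correct answer when its empirical frequency is below $2^{-k}$), which the paper leaves implicit.
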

\begin{proof}
By assumption there is an algorithm that finds a vector $(b_1, \ldots, b_k)$ that is equal to $(L(x_1), \ldots, L(x_k))$ with probability $\epsilon \le 2^{-k} - 1/n^{O(1)}$.
If we iterate this algorithm for suitably $\poly(n)$ many times, then by Chernoff bound the vector $(b_1, \ldots, b_k)$ that has appeared with the most frequency
is equal to $(L(x_1), \ldots, L(x_k))$ with probability $< 2^{-nk}$.
Let $r$ be the randomness used in these many iterations.
Since there are only $2^{nk}$ vectors $(x_1, \ldots, x_k)$,
by the union bound
we can hardwire the randomness $r$ into the iterated algorithm
to get a polynomial-size circuit that outputs a vector $(b_1, \ldots, b_k) \ne (L(x_1), \ldots, L(x_k))$ given $(x_1, \ldots, x_k)$.
\end{proof}

In Section~\ref{Sec:randomized} we show that certain problems are randomized $k$-membership comparable.
By the above, they are also in $\P/\poly$ for constant $k$.

We will use the following direct product theorem in our proof of Theorem~\ref{thm:mutualinformation}.

\begin{lemma}
[\cite{GNW}]
\label{lemma:GNW}
Fix a language $L$ and a probability distribution $\cal D$ on $n$-bit strings.
Assume that circuits of size $s(n)$ cannot compute $L(x)$ for $x$ chosen according to $\cal D$ with probability better than $p(n) \in [1/2,1]$.
Then for any $\epsilon(n) > 0$,
circuits of size poly$(\epsilon(n)/n) s(n)$ cannot compute $(L(x_1), \ldots, L(x_k))$ for i.i.d.\ $x_1, \ldots, x_k \in \{0,1\}^n$,
where each $x_i$ is chosen according to $\cal D$,
with probability better than $p(n)^k + \epsilon$.
\end{lemma}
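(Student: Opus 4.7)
The plan is to prove the contrapositive. Suppose $C$ is a circuit of size $s'(n) = s(n) \cdot \mathrm{poly}(\epsilon/n)$ that computes $(L(x_1), \ldots, L(x_k))$ with probability exceeding $p^k + \epsilon$ when $x_1, \ldots, x_k$ are drawn i.i.d.\ from $\mathcal D$. The goal is to build a non-uniform circuit of size $s(n)$ computing $L$ under $\mathcal D$ with probability greater than $p$, contradicting the hypothesis of the lemma.

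The heart of the argument is a standard telescoping/hybrid step. For $0 \le i \le k$, let $q_i = \Pr[C\text{ is correct on coordinates } 1,\ldots,i]$, so $q_0 = 1$ and $q_k > p^k + \epsilon$. Since $\prod_{i=1}^k (q_i/q_{i-1}) = q_k > p^k$, an averaging argument produces some index $i^*$ such that $q_{i^*}/q_{i^*-1} \ge p + \Omega(\epsilon / (k p^{k-1}))$. In words: conditioned on $C$ being correct on the first $i^*-1$ coordinates, $C$ predicts $L(x_{i^*})$ correctly with probability noticeably above $p$.

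To convert this conditional statement into a single-instance predictor, I exploit non-uniformity. The advice will consist of the index $i^*$, samples $x_1^*, \ldots, x_{i^*-1}^*$ drawn i.i.d.\ from $\mathcal D$, and the target string $\sigma = (L(x_1^*), \ldots, L(x_{i^*-1}^*))$. On input $x$, the predictor samples $x_{i^*+1}, \ldots, x_k$ from $\mathcal D$ via a hardwired random tape, runs $C$ on $(x_1^*, \ldots, x_{i^*-1}^*, x, x_{i^*+1}, \ldots, x_k)$, and outputs the $i^*$-th bit of $C$ when $C$'s outputs on positions $1, \ldots, i^*-1$ match $\sigma$; otherwise it falls back to a default guess. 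An averaging argument, first over the random choice of advice and then over $x \sim \mathcal D$, shows that for some fixed setting of $(i^*, x_1^*, \ldots, x_{i^*-1}^*, \sigma)$ the predictor is correct on $x \sim \mathcal D$ with probability exceeding $p$.

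The main obstacle is that the mismatch branch yields only $\tfrac{1}{2}$, which can drag the overall success probability below $p$ if the event that $C$'s prefix agrees with $\sigma$ is too rare. I would address this by amplification: rerun $C$ on $T = \mathrm{poly}(n/\epsilon)$ independent hardwired samples of $(x_{i^*+1}, \ldots, x_k)$ and output the majority vote of the $i^*$-th bits over those runs whose prefix matches $\sigma$. Since the match event holds with probability at least $q_{i^*-1} \ge q_k > \epsilon$ over the randomness, a Chernoff bound shows that $T$ repetitions are enough to drive the conditional bias above $p$ with high probability. These $\mathrm{poly}(n/\epsilon)$ repetitions of $C$ account for precisely the size blowup from $s'(n)$ back up to $s(n)$ stated in the lemma.
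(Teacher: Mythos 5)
The paper offers no proof of this lemma --- it is quoted verbatim from Goldreich--Nisan--Wigderson --- so the only benchmark is their proof, and your sketch does not reproduce a correct argument. The telescoping step itself is fine: some index $i^*$ has $q_{i^*}/q_{i^*-1}\ge p+\delta$ with $\delta=\Omega(\epsilon/(kp^{k-1}))$. The gap is in converting this into a single-instance predictor, which is exactly where direct product theorems are known to be delicate. The guarantee concerns $x_{i^*}$ \emph{conditioned on the event $E$ that $C$ is correct on the first $i^*-1$ coordinates}, and $E$ depends on $x_{i^*}$ and the suffix, not just on the hardwired prefix; so $E$ reweights $\cal D$. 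Concretely, after fixing a good prefix, let $a(x)=\Pr_{\rm suffix}[E]$ and $b(x)=\Pr_{\rm suffix}[E\wedge C_{i^*}=L(x)]$. All the hybrid gives you is a bound on a ratio of expectations, $\mathbb{E}_x[b]\ge(p+\delta)\,\mathbb{E}_x[a]$. Your majority vote over matching runs answers correctly on $x$ essentially iff $b(x)/a(x)>1/2$, and the ratio-of-expectations bound does not force $\Pr_{x\sim\cal D}[\,b(x)/a(x)>1/2\,]>p$. For instance, with $p=1/2$, put $a(x_0)=b(x_0)=1$ on a single $x_0$ of $\cal D$-mass $\mu\approx\epsilon$, and $a(x)=\mu$, $b(x)=2\delta\mu$ for all other $x$: then $\mathbb{E}[b]/\mathbb{E}[a]=1/2+\delta$ as required, yet the conditional bias is $2\delta<1/2$ off $x_0$, the matching event still occurs among $T=\mathrm{poly}(n/\epsilon)$ trials, and the majority vote is wrong almost everywhere, so the predictor succeeds with probability about $\mu\ll p$. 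Relatedly, your appeal to Chernoff needs $a(x)$ bounded below for the \emph{given} $x$, whereas $q_{i^*-1}>\epsilon$ only bounds $\mathbb{E}_x[a]$.

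The GNW proof (Levin's argument) circumvents this with two ingredients your sketch lacks. First, it proceeds by induction two blocks at a time with a case analysis: either some fixing of the first block already yields a circuit computing the second block with probability greater than $p_2+\epsilon/2$ (contradicting the inductive hypothesis), or else the ``match probability'' $Q(x)=\Pr[\hbox{second block correct}]$ is at most $p_2+\epsilon/2$ for \emph{every} $x$, not merely on average. Second, in the latter case the predictor does not take a majority: it outputs $\sigma$ with probability proportional to $\Pr[\,C_1=\sigma\wedge\hbox{second block correct}\,]$ (estimated from hardwired labeled samples), so its success probability on each $x$ is $b(x)/Q(x)\ge b(x)/(p_2+\epsilon/2)$, and taking expectations over $x$ gives success greater than $p_1$ directly from the ratio-of-expectations bound. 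The uniform bound on $Q(x)$ supplied by the case analysis, together with the proportional-sampling (rather than argmax) decision rule, is what makes the expectation computation go through; without them the hybrid argument as you have set it up cannot be closed.
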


\vspace{.16in}
\noindent
\textbf{Entropy, mutual information:}
In the following we review the definition and properties of some information theoretic quantities that will be used in Section~\ref{Sec:information}.

For a discrete random variable $X$, the entropy of $X$ denoted by $H(X)$ is defined by
$$H(X)=\sum_x p(x)\lg_2\frac{1}{p(x)}.$$
Clearly,
\begin{align}
H(X) \ge \lg_2\frac{1}{\max_x p(x)}. \label{eq:minentropy}
\end{align}
When $X$ is a Bernoulli random variable with parameter $p$, the entropy function is written as
$$h(p)=p\lg\frac{1}{p}+(1-p)\lg\frac{1}{1-p}.$$
Fixing the size of the alphabet set of $X$ to be $m$, the maximum of $H(X)$ is attained at uniform distribution and is equal to $\lg(m)$.

For two discrete random variables $X, Y$, the conditional entropy $H(X|Y)$ is defined by
$$H(X|Y)=\sum_{y} p(y) H(X|Y=y),$$
where $H(X|Y=y)$ is the entropy of the random variable having distribution $p(x|y)$.
The conditional entropy $H(X|Y)$ is zero if $X$ is a \emph{function} of $Y$.
It is also easy to see that $H(X|Y) = H(X,Y) - H(Y)$.

Mutual information between two random variables $X$ and $Y$ is defined by
$$I(X;Y)=H(X)-H(Y|X)=H(Y)-H(Y|X)=H(X)+H(Y)-H(X,Y).$$
We always have $I(X;Y)\geq 0$.

Given three random variables, the conditional mutual information is defined by
$$I(X;Y|Z)=H(X|Z)-H(X|Y,Z)=H(Y|Z)-H(Y|X,Z).$$
We again have $I(X;Y|Z)\geq 0$.
Observe that $I(X;Y|Z)=0$ if for instance $X$ is a function of $Z$. The following equation which can easily be verified is known as the \emph{chain rule}:
$$I(X;YZ)=I(X;Y)+I(X;Z|Y).$$

\section{Randomized Complexity}
\label{Sec:randomized}
In the following theorem, we show
that
$\ell$ help bits cannot increase the probability
of correctly finding $(L(x_1), \ldots, L(x_k))$ by a factor of more than $2^l$
unless $L$ is non-uniform $k$-membership comparable.

\begin{theorem}\label{Thm:Random}
Let $\ell < k = O(\lg n)$ and $p \ge 2^{\ell-k} + 1/n^{O(1)}$.
If for a language $L$, we can correctly solve $k$ instances of size $n$ using $\ell$ help bits with probability $\ge p$,
then $L$ is randomized $k$-membership comparable.
\end{theorem}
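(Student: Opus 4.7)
The plan is to directly construct the desired randomized algorithm by combining two ideas: enumerating over the unknown help bits (which is affordable since $2^\ell \le 2^k = \operatorname{poly}(n)$), and then inverting the resulting bias on the correct answer via a ``pick anything but $c$'' step.

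Let $h^* = h^*(x_1,\dots,x_k) \in \{0,1\}^\ell$ denote the (unknown) optimal help string guaranteed by the hypothesis, so that the randomized solver $A$ satisfies $\Pr_r[A(x_1,\dots,x_k,h^*;r) = (L(x_1),\dots,L(x_k))] \ge p$. Consider the randomized algorithm $B$ that on input $(x_1,\dots,x_k)$ first picks $h \in \{0,1\}^\ell$ and randomness $r$ uniformly at random, computes $c := A(x_1,\dots,x_k,h;r)$, and then outputs $b$ drawn uniformly from $\{0,1\}^k \setminus \{c\}$. This is polynomial time because $\ell \le k = O(\lg n)$ makes sampling $h$ and $b$ cheap.

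Write $L := (L(x_1),\dots,L(x_k))$. Since $h$ hits $h^*$ with probability $2^{-\ell}$,
\[
\Pr[c = L] \;\ge\; 2^{-\ell} \cdot p \;\ge\; 2^{-\ell}\bigl(2^{\ell-k} + 1/n^{O(1)}\bigr) \;=\; 2^{-k} + 1/n^{O(1)},
\]
where the last equality uses $2^\ell \le 2^k = \operatorname{poly}(n)$. By construction $b$ never equals $c$, so
\[
\Pr[b = L] \;=\; \Pr[c \ne L] \cdot \tfrac{1}{2^k - 1} \;\le\; \frac{1 - 2^{-k} - 1/n^{O(1)}}{2^k - 1} \;=\; 2^{-k} - \frac{1/n^{O(1)}}{2^k - 1}.
\]
Because $2^k - 1 = \operatorname{poly}(n)$, the right-hand side is $2^{-k} - 1/n^{O(1)}$, and hence $\Pr[b \ne L] \ge (1 - 2^{-k}) + 1/n^{O(1)}$, which is exactly the condition for randomized $k$-membership comparability.

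There is no real obstacle in this argument; the one delicate point is to verify that the $1/n^{O(1)}$ slack in the hypothesis on $p$ survives both the multiplicative loss $2^{-\ell}$ (from averaging over help strings) and the division by $2^k - 1$ (from spreading mass uniformly over the complement of $c$), and this is precisely where the assumption $k = O(\lg n)$ is used, since it forces both $2^\ell$ and $2^k - 1$ to remain polynomial in $n$.
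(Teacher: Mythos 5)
Your proposal is correct and follows essentially the same argument as the paper's own proof: guess the $\ell$ help bits uniformly at random to get a candidate answer that is correct with probability at least $2^{-\ell}p \ge 2^{-k} + 1/n^{O(1)}$, then output a uniformly random string from the complement of that candidate. The bookkeeping of the $1/n^{O(1)}$ slack through the factors $2^{-\ell}$ and $1/(2^k-1)$ is carried out correctly and matches the paper's computation.
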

\begin{proof}
Suppose that we guess the value of the $\ell$ help bits uniformly at random.
Then with probability $2^{-\ell}$ our guess of the help bits will be correct,
and hence with probability $q = 2^{-\ell}p\ge2^{-k}+1/n^{O(1)}$ we can correctly guess the answer to the $k$ instances without any help bits.

Assume that $(c_1, \ldots, c_k)$ is our guess for the $k$ instances.
If we choose $(b_1, \ldots, b_k)$ uniformly at random among all $2^k-1$ instances which are different from
$(c_1, \ldots, c_k)$, the probability that $(b_1, \ldots, b_k)$ is equal to the correct answer to the $k$ instances
is $(1 - q) /(2^k - 1) \le 2^{-k} - 1/n^{O(1)}$.
\end{proof}

As mentioned in the previous section using Theorem~\ref{ABGtheorem} and Lemma~\ref{ABGprob} we find that with the assumptions of the above theorem if $k$ is constant, then $L$ is in $\P/\poly$.

Rate distortion theory studies the minimum number of bits required for recovering, with some bounded distortion, a target bit-string. More precisely, to recover $(1-\alpha)$ fraction of a bit-string of length $k$ correctly we need about $k(1-h(\alpha))$ bits of information about that bit-string.
The following theorem shows that
we cannot decrease this minimum number of required bits
even if we know that the bits in the target bit-string
are the yes/no answers to $k$ instances of a hard decision problem.

\begin{theorem}\label{Thm:RateDistortion}
Let $p>0$ and $0<\alpha, \epsilon<1/2$ be some constants, and $L$ be some language. Also let $k = O(\lg n)$.
We are given $k$ arbitrary instances $x_1,\dots, x_k$ of $L$.
\begin{enumerate}
\item[\rm{(i)}] Assume that using $\ell< k(1-h(\alpha)-\epsilon)$ help bits, we can efficiently and with probability $p$ find a bit-string which matches $(L(x_1),\dots, L(x_k))$ in at least $(1-\alpha)$ fraction of positions. Then $L$ is randomized $k$-membership comparable if $k \ge \lg (3/(2p))/\epsilon$.

\item[\rm{(ii)}] Using $\ell\ge \lceil k(1-h(\alpha)+\epsilon) \rceil$ help bits, it is possible to efficiently find a bit-string which matches $(L(x_1),\dots, L(x_k))$ in at least $(1-\alpha)$ fraction of positions with probability one, if $k \ge 6(\epsilon^{-1} \lg(\epsilon^{-1}))$.
\end{enumerate}
\end{theorem}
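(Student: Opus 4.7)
For part (i), the plan is to mimic the proof of Theorem~\ref{Thm:Random} with a rate-distortion twist. First guess the $\ell$ help bits uniformly at random (correct with probability $2^{-\ell}$) and feed them to the hypothesized algorithm, producing a candidate $c \in \{0,1\}^k$. Conditional on a correct guess the algorithm succeeds with probability $p$, so the event $\mathbf{L} \in B_{\alpha k}(c)$---where $\mathbf{L} = (L(x_1), \ldots, L(x_k))$ and $B_{\alpha k}(c)$ denotes the Hamming ball of radius $\alpha k$ around $c$---occurs with probability at least $p/2^\ell$. We then output $b$ drawn uniformly from $\{0,1\}^k \setminus B_{\alpha k}(c)$, which is easy to sample (e.g., first sample the Hamming weight of $b\oplus c$ from $\{\lfloor \alpha k \rfloor+1,\ldots,k\}$ with appropriate weights). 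When $\mathbf{L} \in B_{\alpha k}(c)$ this forces $b \ne \mathbf{L}$; otherwise $\Pr[b=\mathbf{L}\mid c] = 1/(2^k-|B_{\alpha k}|)$. Combining with the volume bound $|B_{\alpha k}| \le 2^{kh(\alpha)}$, a short calculation gives $2^{-k}-\Pr[b=\mathbf{L}] \ge (p/2^\ell - |B_{\alpha k}|/2^k)/(2^k-|B_{\alpha k}|)$; using $2^{-\ell} > 2^{-k(1-h(\alpha)-\epsilon)}$, the numerator exceeds $(p\cdot 2^{k\epsilon}-1)\cdot 2^{-k(1-h(\alpha))}$. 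The hypothesis $k \ge \lg(3/(2p))/\epsilon$ is exactly what yields $p\cdot 2^{k\epsilon} \ge 3/2$, giving a $1/n^{O(1)}$ advantage over $2^{-k}$ since $k = O(\lg n)$, which establishes randomized $k$-membership comparability.

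For part (ii), the plan is to hand-craft a covering code: precompute $C \subseteq \{0,1\}^k$ with $|C| \le 2^{\lceil k(1-h(\alpha)+\epsilon)\rceil}$ such that every point of $\{0,1\}^k$ lies within Hamming distance $\alpha k$ of some codeword of $C$. The helper then sends the $\lceil k(1-h(\alpha)+\epsilon)\rceil$-bit index of a codeword closest to $\mathbf{L}$, and the algorithm outputs that codeword, which matches $\mathbf{L}$ in at least $(1-\alpha)$ fraction of positions with probability one. Existence and polynomial-time construction follow from the greedy set-cover argument: each new codeword can be chosen to cover at least a $|B_{\alpha k}|/2^k$ fraction of the remaining uncovered points, so after $t = O(k)\cdot 2^k/|B_{\alpha k}|$ steps no points remain uncovered. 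With the standard lower bound $|B_{\alpha k}| \ge 2^{kh(\alpha)}/\mathrm{poly}(k)$ this gives $|C| \le \mathrm{poly}(k)\cdot 2^{k(1-h(\alpha))}$, and the hypothesis $k \ge 6(\epsilon^{-1}\lg \epsilon^{-1})$ is calibrated so that the $\mathrm{poly}(k)$ factor is dominated by $2^{k\epsilon}$. Since $k = O(\lg n)$, both the greedy precomputation of $C$ and the lookup-based decoder run in time polynomial in $n$.

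The main obstacle in (i) is the bookkeeping to obtain a genuinely polynomial $1/n^{O(1)}$ advantage over $1-2^{-k}$: the improvement $p/2^\ell$ over uniform guessing must not only be positive but must dominate the additive $2^{-k(1-h(\alpha))}$ slack arising from the nonzero volume of the Hamming ball, which is why the constant $3/(2p)$ rather than the naive $1/p$ appears in the hypothesis on $k$. For part (ii) the delicate point is converting the information-theoretic $2^{k(1-h(\alpha))}$ bound into an explicit, deterministically constructed, polynomially sized covering code; a plain random-coding argument yields existence only with high probability, whereas the greedy construction delivers the code deterministically and in polynomial time, at the cost of a $\mathrm{poly}(k)$ slack that must be absorbed by the $k \gtrsim \epsilon^{-1}\log\epsilon^{-1}$ hypothesis.
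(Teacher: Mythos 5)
Your proposal is correct, and the two parts deserve separate comments. For part (ii) you follow essentially the same route as the paper: a greedy covering-code construction, the averaging argument that each new ball covers a $q=|B_{\alpha k}|/2^k$ fraction of the uncovered points, the bound $2^k(1-q)^{|C|}<1$, and the lower bound $|B_{\alpha k}|\ge 2^{kh(\alpha)}/\sqrt{8k\alpha(1-\alpha)}$ to absorb the $\poly(k)$ slack into $2^{k\epsilon}$; nothing to add there. For part (i), however, you take a genuinely different and simpler route. The paper enumerates all $2^{\ell}$ help strings, runs the algorithm $t=O(k)$ times on each, defines for each help string $j$ the set $M_j$ of vectors within distance $\alpha k$ of at least a $2p/3$ fraction of the $t$ outputs, bounds $|M_j|\le 2^{kh(\alpha)}/p'$ by double counting edges in a bipartite graph, shows $|\bigcup_j M_j|<2^k$, and outputs a fixed point outside the union; a Chernoff bound guarantees the true answer lies in $M_{j^*}$ with probability $1-2^{-(k+1)}$, giving advantage exactly $2^{-k-1}$. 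You instead guess the help bits once, run the algorithm once to get $c$, and output a uniform point of $\{0,1\}^k\setminus B_{\alpha k}(c)$; your identity $2^{-k}-\Pr[b=\mathbf{L}]=\bigl(\Pr[\mathbf{L}\in B_{\alpha k}(c)]-|B_{\alpha k}|2^{-k}\bigr)/(2^k-|B_{\alpha k}|)$ is exact, and the hypothesis $k\ge \lg(3/(2p))/\epsilon$ makes the numerator at least $\tfrac12 2^{-k(1-h(\alpha))}=1/n^{O(1)}$, which suffices since the denominator is $n^{O(1)}$. Your one-shot argument avoids the amplification and double-counting machinery entirely; what the paper's version buys is a larger advantage ($2^{-k-1}$, a constant fraction of $2^{-k}$, versus your roughly $p\,2^{-\ell}2^{-k}$), but since any $1/n^{O(1)}$ advantage meets the definition of randomized $k$-membership comparability and is amplified anyway in Lemma~\ref{ABGprob}, this difference is immaterial.
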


\begin{proof} (i)  Without access to the help bits, we can try all the $2^{\ell}$ bit-strings of length $\ell$ for the help bits. For each such bit-string, we run the randomized algorithm many times (say $t$ times), forming a large table whose rows are $k$-bit strings. We denote these rows by $w_{ij}$ for $i\in\{1,2,\cdots, t\}, j\in\B^\ell$, i.e., each $w_{ij}$ is a row vector of $k$ bits. Therefore we have $2^{\ell}$ tables of size $t\times k$, with the $j$-th table having rows $w_{1j}, w_{2j}, \dots, w_{tj}$.

Let $j^*\in \B^\ell$ be the correct string of help bits.
Take $p' = 2p/3$, and $\delta = 2^{-(k+1)}$. By Chernoff bound, for sufficiently large (yet linear in $k$) $t$, with probability $1-\delta$, $p'$  fraction of rows $w_{1j^*}, w_{2j^*}, \dots, w_{tj^*}$ are within hamming distance $\alpha k$ from $(L(x_1),\dots, L(x_k))$. We will use this property to find some $(b_1,\dots, b_k)$ unequal to $(L(x_1),\dots, L(x_k))$.

For any $j\in \B^{\ell}$ define
$$M_j=\big\{v\in\{0,1\}^{k}: \exists I\subseteq \{1,2,\cdots, t\}: |I|\geq p't, d_H(v, w_{ij})\leq \alpha k,~\forall i\in I\big\},$$
where $d_H(\cdot, \cdot)$ denotes the hamming distance. By the above discussion with probability $\ge1-\delta$, the correct answer $(L(x_1),\dots, L(x_k))$ belongs to $M_{j^*}$.
Observe that $M_j$'s can be computed efficiently because $2^k$ and $t$ are of size polynomial in $n$.

We claim that for any $j$,
\begin{align}\label{eq:bound-mj}
|M_j|\leq  2^{k(h(\alpha)+\epsilon)}.
\end{align}
To this end consider a bipartite graph with vertex set $\{1,2, \dots, t\}$ for one part and vertex set $M_{j}$ for the other. We draw an edge $i\sim v$ for $i\in\{1,2, \dots, t\}$ and $v\in M_j$  if $d_H(w_{ij}, v)\leq \alpha k$. Then by the definition of $M_{j}$, for any $v\in M_{j}$
we have $\text{deg}(v)\geq p't$.
On the other hand, the number of sequences that are within hamming distance $\alpha k$ of any given bit-string of size $k$  is at most
$2^{kh(\alpha)}$ (see e.g., \cite[page 310, Lemma 8]{MW}). This means that for every $1\leq i\leq t$ we have $\text{deg}(i)\leq 2^{kh(\alpha)}$. Double counting the number of edges of the bipartite graph we obtain
$$p't|M_{j}| \leq\sum_{v\in M_{j}}  \text{deg}(v)=\sum_i \text{deg}(i)\leq t2^{kh(\alpha)}.$$
This gives~\eqref{eq:bound-mj} if $k \ge \lg (1/p')/\epsilon$.

Now  using equation \eqref{eq:bound-mj} and the union bound, we get that 
$$\bigg| \bigcup_j {M}_j \bigg|\leq 2^\ell2^{k(h(\alpha)+\epsilon)}<2^k.$$
Therefore we can find $(b_1,\dots, b_k)\in \B^k$ that does not belong to $\bigcup_j {M}_j$. Since with probability at least
$1- \delta = 1 - 2^{-k} + 2^{-k-1} = 1 - 2^{-k} + 1/n^{O(1)}$ we have $(L(x_1),\dots, L(x_k))\in M_{j^*}$, then $(b_1, \dots, b_k)\neq (L(x_1),\dots, L(x_k))$ with probability $1 - 2^{-k} + 1/n^{O(1)}$.
Thus, $L$ is randomized $k$-membership comparable.\\

(ii)
We claim that there is a subset $R\subseteq \B^k$ of size $|R|=2^{k(1 - h(\alpha)+\epsilon)}$ such that for any $v\in\{0,1\}^k$ there is $w\in R$ such that $d_H(v, w)\leq \alpha k$. This requires a more  refined argument than the one given by the standard rate distortion theory which is only telling us that for \emph{most} sequences $v\in\{0,1\}^k$ there is $w\in R$ such that $d_H(v, w)\leq \alpha k$. 
Furthermore,  we show that  we can construct $R$ in time polynomial in $2^k$. Assuming this, the helper can give the sequence $w\in R$ whose hamming distance from $(L(x_1),\dots, L(x_k))$ is at most $\alpha k$, and the algorithm may just output this sequence. Note that the helper would need $\lceil \lg |R| \rceil \leq \ell$ bits to send $w$.

For a point $v \in \{0,1\}^k$ and $r$,
let $B(v,r)$ denote the hamming ball of radius $r$ around $v$.  With this definition, we are looking for a set $R$ of points in $\{0,1\}^k$ whose union of hamming balls cover the entire set  $\{0,1\}^k$. 
To construct the set $R$,
we start with an empty set and  use a greedy algorithm to add points $v^1, \ldots, v^{|R|}$ in this order as follows:
Having added points $v^1, \ldots, v^j$ for
$0 \le j < |R|$,
we choose point $v^{j+1} \in \{0,1\}^k$ such that
$B(v^{j+1}, k\alpha)$ covers
the most number of points not covered by $B(v^1, k\alpha), \ldots, B(v^j, k\alpha)$. In case there are multiple choices of $v^{j+1}$ with this property,
choose $v^{j+1}$ to be the lexicographically smallest one.

Let $q=q(k, \alpha k)$ be the probability that a random point is in hamming distance $k\alpha$ of a given point.
Any point not in $\bigcup_{i=1}^j B(v^i, k\alpha)$ is covered by $q$ fraction of balls of radius $k\alpha$. Then there is $B(v, k\alpha)$ that covers at least $q$ fraction of points not in $\bigcup_{i=1}^j B(v^i, k\alpha)$. 
Based on this observation and by a simple induction we find that 
the number of points not covered by $B(v^1, k\alpha), \ldots, B(v^{|R|}, k\alpha)$ is
$\le 2^k (1 - q)^{|R|}$. 
By \cite[page 310, Lemma 8]{MW},
we have $$q \ge \frac{2^{k(-1+h(\alpha))}}{\sqrt{8k\alpha(1-\alpha)}} \ge \frac{2^{k(-1+h(\alpha))}}{\sqrt{2k}}.$$
When $k \geq 6\epsilon^{-1} \lg(\epsilon^{-1})$,
we have $2^k(1 - q)^{|R|} \le 2^k e^{-q |R|} < 1$. Hence no point in $\{0,1\}^k$ remains uncovered.
 We are done.
\end{proof}

Again with the assumption of part (i) of this theorem, if $k$ is constant then $L$ is in $\P/\poly$.

We note that given $k$ instances of a computational problem, the task of
solving at least a given fraction of the instances correctly had been studied before
(e.g., see \cite{HW}).
What is new to our work (Theorem~\ref{Thm:RateDistortion}) is allowing that the task be done
using some limited help bits and also with some bounded probability of error.

\section{Average-case Complexity}
\label{Sec:information}
As motivated earlier, 
the length of the help bits may depend on the instances of the problem since for easy instances we do not need any help. In this section we  measure the amount of help on average over all choices of the $k$ instances. For this to make sense we need a distribution on instances, so we fall in the setting of average-case complexity.

Let us fix some notation before stating our results. Let $\mathcal D_n$ be a distribution on instances of length $n$. We assume that the $k$ instances $x_1,\dots, x_k$ are drawn i.i.d.\ according to $\mathcal D_n$. We denote the random variables associated to these randomly chosen instances by $X_1,\dots, X_k$. Let $\ell$ be the maximum length of the help bits. 
This means that the alphabet set of the help bits is $\Sigma=\bigcup_{i=0}^{\ell}\{0,1\}^i$. Here we allow $\ell$ to be at most $\ell = O(\lg n)$ so that the size of its alphabet set be $|\Sigma|=n^{O(1)}$.
Since $x_1, \dots, x_k$ are chosen randomly, the string of help bits is a random variable too. We denote this random variable 
by $S=S(X_1,\dots, X_k)$ which takes its values in $\Sigma$.

We may measure the amount of the help bits in two different ways: entropy of the help bits random variable, and its average length. 
In the following we first consider entropy, and later comment on average length. 
 
\begin{theorem}
\label{thm:mutualinformation}
Let $L$ be a language.
For every integer $n$ fix a distribution
$\mathcal D_n$ on instances of size $n$, i.e., on $\B^n$.
Let $k=O(\lg n)$, and let $\Sigma=\bigcup_{i=0}^{\ell}\{0,1\}^i$ such that $|\Sigma| = n^{O(1)}$. Suppose that there is an efficient algorithm such that for every $x_1,\dots, x_k\in \B^n$ and using help bits $s=s(x_1,\dots, x_k)\in \Sigma$  outputs $(L(x_1), \dots, L(x_k))$.
Assume further that for $1/n^{O(1)} < \delta(n,k) < 1/2$,
$$H(S)\leq k - 3 k \delta,$$
where $S=S(X_1,\dots, X_k)$ is the random variable corresponding to the help bits of the randomly chosen $x_1,\dots, x_k\in \B^n$ according to $\mathcal D_n$. Then there exist polynomial size circuits that correctly solve a randomly chosen $x\in \B^n$
with probability
$\ge 1/2 + \delta$ without having access to the help bits.
\end{theorem}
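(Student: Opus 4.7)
My plan is to prove the contrapositive by contradiction, combining the direct product theorem (Lemma~\ref{lemma:GNW}) with a ``popular help string'' argument. I will assume that no polynomial-size circuit computes $L(X)$ for $X \sim \mathcal{D}_n$ with probability $\ge 1/2 + \delta$ and derive a contradiction with the hypothesis $H(S) \le k(1-3\delta)$.

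First, from the entropy hypothesis together with the min-entropy lower bound~\eqref{eq:minentropy}, I obtain some $s^* \in \Sigma$ with $\Pr[S = s^*] \ge 2^{-H(S)} \ge 2^{-k(1-3\delta)}$. Hardwiring $s^*$ as non-uniform advice into the efficient algorithm $A$ yields a polynomial-size circuit
\[
C(x_1,\ldots,x_k) := A(x_1,\ldots,x_k, s^*),
\]
which outputs $(L(X_1),\ldots,L(X_k))$ correctly whenever $S(X_1,\ldots,X_k) = s^*$; hence its success probability on i.i.d.\ inputs is at least $2^{-k(1-3\delta)} = 2^{-k}\cdot 2^{3k\delta}$. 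Second, the contradiction hypothesis combined with Lemma~\ref{lemma:GNW} applied with $p(n) = 1/2 + \delta$ and a parameter $\epsilon' > 0$ rules out any polynomial-size circuit computing $(L(X_1),\ldots,L(X_k))$ for i.i.d.\ $X_i \sim \mathcal{D}_n$ with success exceeding $2^{-k}(1 + 2\delta)^k + \epsilon'$.

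It then suffices to verify the numerical inequality $2^{3k\delta} > (1+2\delta)^k + 2^k\epsilon'$ for an appropriate $\epsilon'$. Using $(1+2\delta)^k \le e^{2k\delta}$ and the key fact $3\ln 2 > 2$, one has $2^{3k\delta}/(1+2\delta)^k \ge e^{(3\ln 2 - 2)k\delta} \ge 1 + \Omega(k\delta)$, so the gap $2^{3k\delta} - (1+2\delta)^k$ is at least $\Omega(k\delta)$, giving an additive gap of at least $\Omega(k\delta \cdot 2^{-k}) \ge 1/n^{O(1)}$ under the hypotheses $k = O(\lg n)$ and $\delta \ge 1/n^{O(1)}$. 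Choosing $\epsilon' = 1/n^{c}$ for a sufficiently large constant $c$ makes $2^k\epsilon'$ strictly smaller than this gap while keeping the circuit-size bound from Lemma~\ref{lemma:GNW} polynomial in $n$, which completes the contradiction.

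The main obstacle I anticipate lies precisely in this quantitative balancing when $\delta$ is only inverse-polynomial: in that regime the slack between $2^{3k\delta}$ and $(1+2\delta)^k$ is only first order in $k\delta$, so one must track the constants $3$ and $2/\ln 2 \approx 2.885$ carefully, and verify that the $\epsilon'$-tolerance of Lemma~\ref{lemma:GNW} can simultaneously be made smaller than the gap and large enough for the induced polynomial blow-up in circuit size to stay within $\text{poly}(n)$.
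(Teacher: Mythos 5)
Your proof is correct, but it replaces the heart of the paper's argument (Steps 1--3) with a genuinely different and more elementary construction. The paper does not hardwire a single popular help string; instead it defines the tuple $T=\big(C(x_1,\ldots,x_k,\sigma_1),\ldots,C(x_1,\ldots,x_k,\sigma_{|\Sigma|})\big)$ of the algorithm's outputs over \emph{all} help strings, proves $H(L(X_1),\ldots,L(X_k)\mid T)\le k-3\delta k$ via the chain rule for mutual information, and then (by Jensen and \eqref{eq:minentropy} applied conditionally) hardwires, for each value $t$, the maximum-likelihood answer $v_t$ as advice; this table-lookup circuit succeeds with probability $\ge 2^{-k+3\delta k}$. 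Your shortcut --- pick $s^*$ maximizing $\Pr[S=s^*]\ge 2^{-H(S)}\ge 2^{-k+3k\delta}$ and run $A(\cdot,s^*)$ --- reaches exactly the same success probability with less machinery, and your final step (Lemma~\ref{lemma:GNW} with $p=1/2+\delta$ plus the $3\ln 2>2$ arithmetic, including the careful choice of $\epsilon'$ versus the circuit-size blow-up) coincides with the paper's Step 4. What the paper's longer route buys is robustness: its Step 1 only uses the bound $I(X_1,\ldots,X_k;S)\le H(S)$, so the identical proof goes through under the weaker hypothesis $I(X_1,\ldots,X_k;S)\le k(1-3\delta)$, as the authors note immediately after the proof. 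Your argument does not survive that relaxation, since small mutual information between $S$ and the instances does not force any single help string to be popular (e.g.\ $S$ could be nearly uniform yet nearly independent of the instances). For the theorem exactly as stated, with the hypothesis on $H(S)$, your proof is complete and valid.
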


Notice that by hardness-amplification techniques such as \cite{O'Donnell} and \cite{HVV},
it is plausible to find problems in NP with average-case hardness such that
it is not possible to solve them efficiently with probability better than $1/2 + o(1)$.
For such problems, the above theorem becomes relevant.\\

\begin{proof}
Let $\{C_n: n\geq 1\}$ be the polynomial size circuits that solve $k$ instances of $L$ with help bits, i.e., for all $x_1,\dots, x_k\in \B^n$ and the appropriate $s\in \Sigma$ we have $C_n(x_1,\dots, x_k, s)= (L(x_1),\dots, L(x_k))$. In the following we drop the subscript $n$ in $C_n$ and $\mathcal D_n$ as they are clear from the context. We prove the theorem in a few steps.\\

\noindent
(Step 1) Let us define
$$t=t(x_1,\dots, x_k)=\big(C(x_1,\dots, x_k, \sigma_1), \dots, C(x_1,\dots, x_k, \sigma_{|\Sigma|})\big),$$
where $\Sigma=\{\sigma_1,\dots, \sigma_{|\Sigma|}\}$, and let $T$ be its corresponding random variable (when $x_i$'s are chosen randomly).
Note that $t$ can be computed by a polynomial size circuit since $|\Sigma|=n^{O(1)}$.  We claim that
$$H\big(L(X_1), \ldots , L(X_k)| T\big) \le k - 3 \delta k.$$

For any $x_1,\dots, x_k$, if we are given the true help symbol $s=\sigma_i$, we can compute $(L(x_1), \dots , L(x_k))$ from $t$ by taking its $i$-th element.
Thus
\begin{align}
H\big(L(X_1), \dots, L(X_k)| S, T\big)=0.
\label{eqn:a1}
\end{align}

Therefore,
\begin{align}
&H(L(X_1), \ldots, L(X_k)) \nonumber
\\&=I(L(X_1), \ldots, L(X_k); S, T)\label{eqn:a2}
\\&=I(L(X_1), \ldots, L(X_k); T)+I(L(X_1), \ldots, L(X_k); S|T)\label{eqn:a4}
\\&\leq I(L(X_1), \ldots, L(X_k); T)+I(L(X_1), \ldots, L(X_k),T; S)\label{eqn:a5}
\\&\leq I(L(X_1), \ldots, L(X_k); T)+I(X_1, \ldots, X_k, L(X_1), \ldots, L(X_k),T; S)\nonumber
\\&= I(L(X_1), \ldots, L(X_k); T)+I(X_1, \ldots, X_k; S)+I(L(X_1), \ldots, L(X_k),T; S|X_1, \ldots, X_k)\nonumber
\\&= I(L(X_1), \ldots, L(X_k); T)+I(X_1, \ldots, X_k; S)\label{eqn:a3}\\
&\le \big(H(L(X_1), \ldots, L(X_k)) - H(L(X_1), \ldots, L(X_k)| T)\big) +H(S)\nonumber\\
&\le \big(H(L(X_1), \ldots, L(X_k)) - H(L(X_1), \ldots, L(X_k)| T)\big) +k- 3 \delta k. \nonumber
\end{align}
Here
\eqref{eqn:a2} follows from \eqref{eqn:a1} and the definition of mutual information.
\eqref{eqn:a4} and~\eqref{eqn:a5} follow from the chain rule.
Moreover, \eqref{eqn:a3} follows from the fact that $H(L(X_1),\ldots, L(X_k),T|X_1, \ldots, X_k)=0$.
We conclude that
\begin{align}\label{eq:ikx}
H(L(X_1), \dots, L(X_k) | T) \le k - 3 \delta k.
\end{align}\\

\noindent
{(Step 2)}
We have
\begin{align}
&\sum_{t}\Pr(T=t)\bigg[\max_{c_1, \ldots, c_k}\Pr(L(X_1)=c_1, \ldots, L(X_k)=c_k|T=t)\bigg] \nonumber \\
&\ge\sum_{t}\Pr(T=t)2^{-H(L(X_1), \dots, L(X_k) | T = t)} \label{eqn:useminentropy} \\
&\ge2^{-\sum_{t}\Pr(T=t)H(L(X_1), \dots, L(X_k) | T = t)} \nonumber\\
&=2^{-H(L(X_1), \dots, L(X_k) | T)} \nonumber\\
&\ge2^{-k + 3 \delta k},\label{eq:step2}
\end{align}
where \eqref{eqn:useminentropy} follows from equation \eqref{eq:minentropy}.\\

\noindent
{(Step 3)}
In this step, we show how we can construct binary random variables $B_1, \ldots, B_k$ using polynomial-size circuits such that $\Pr[L(X_1) = B_1, \ldots, L(X_k) = B_k] \ge 2^{-k + 3 \delta k}.$

For any $t$, let $v_t$ denote the vector $(c_1, \ldots, c_k) \in \B^k$
that maximizes $\Pr[L(X_1)=c_1,\dots, L(X_k)=c_k| T=t]$. Observe that there are at most $\poly(n)$ possible $t$'s for any $n$  since $k=O(\lg n)$.
Therefore, we can give $v_t$ for all $t$ as advice to our algorithm.

Now given $x_1, \ldots, x_k$, we can compute $t=t(x_1,\dots, x_k)$ as in Step 1 using a polynomial size circuit (because $|\Sigma| = n^{O(1)}$).
Then we give $v_t=(b_1,\dots, b_k)$ as the solution for $(L(x_1), \ldots, L(x_k))$. Inequality~\eqref{eq:step2} basically says that
\begin{align*}\label{eq:bk}
\Pr(L(x_1) = b_1, \ldots, L(x_k) = b_k)
\ge 2^{-k + 3 \delta k}.
\end{align*}\\

\noindent
{(Step 4)}
Assume that we cannot efficiently solve an instance chosen according to distribution $\mathcal D$
with probability better than $1/2 + \delta$.
By the direct product lemma of Goldreich, Nisan, and Wigderson (Lemma \ref{lemma:GNW}),
we cannot compute $(L(X_1), \ldots, L(X_k))$ with probability better than $(1/2 + \delta)^k + 1/p(n)$ for any polynomial $p$ in time polynomial in $n$.
But
\begin{align*}
2^{(-1 + 3 \delta)k} - (1/2 + \delta)^k & \ge 2^{-k} ((2^{3\delta})^k - (1 + 2\delta)^k) \\
&=  2^{-k} ((e^{3 \ln(2) \delta})^k - (1 + 2\delta)^k) \\
& \ge  2^{-k} ((1 + 3 \ln(2) \delta)^k - (1 + 2\delta)^k) \\
& \ge  2^{-k} (3 \ln(2) - 2) \delta (1 + 2\delta)^{k-1} \\
& \ge  1/n^{O(1)},
\end{align*}
because both $2^{-k}$ and $\delta$ are $1/n^{O(1)}$.
Therefore, we cannot compute $(L(X_1), \ldots, L(X_k))$ with probability as good as $2^{-k + 3 \delta k}$.
This is in contradiction with Step 3.
We are done.
\end{proof}

We may measure the amount of help by its mutual information (as a measure of correlation) with the input instances. In this case the same theorem as above holds when we replace the assumption $H(S)\leq k(1 - 3 \delta)$ with $I(X_1, \dots, X_k; S)\leq k(1 - 3 \delta)$. 
Indeed in the above proof we use $ I(X_1, \dots, X_k; S)\leq H(S)\leq k(1 - 3 \delta)$ in Step 1, which holds if we assume $I(X_1, \dots, X_k; S)\leq k(1 - 3 \delta)$ in the first place.

The following proposition is in the converse direction of Theorem \ref{thm:mutualinformation}.

\begin{proposition}
\label{thm:proposition}
Let $L$ be a language. Suppose that there exist polynomial size circuits $C_n$ that correctly solve a randomly chosen $x\in \B^n$
with probability
$\ge 1/2 + \delta$ without having access to any help bits. Then for any $k\in\mathbb{N}$, there are help bits $s=s(x_1,\dots, x_k)\in \{0,1\}^k$ and a polynomial-size circuit that outputs $(L(x_1), \dots, L(x_k))$ using these help bits, such that
$$H(S)\leq kh(1/2 + \delta)=k(1-O(\delta^2)),$$
where $h(\cdot)$ is the binary entropy function.
\end{proposition}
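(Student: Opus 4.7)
The plan is to use the circuits $C_n$ themselves to form an initial guess and to let the help bits encode the necessary correction. Concretely, I would define
$$s(x_1, \ldots, x_k) = (e_1, \ldots, e_k) \in \B^k, \qquad e_i := C_n(x_i) \oplus L(x_i).$$
The unbounded helper can compute each $L(x_i)$ and hence produce $s$. The decoder, given $(x_1,\ldots,x_k)$ together with the help bits $s$, simply outputs $(C_n(x_1)\oplus e_1, \ldots, C_n(x_k)\oplus e_k) = (L(x_1),\ldots,L(x_k))$; this is a polynomial-size circuit since $C_n$ is.

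To bound $H(S)$, let $X_1,\ldots,X_k$ be the i.i.d.\ instances drawn from $\mathcal D_n$ and let $E_i := C_n(X_i) \oplus L(X_i)$ be the corresponding error bits. Because the $X_i$'s are i.i.d., so are the $E_i$'s, and the hypothesis $\Pr[C_n(X) = L(X)] \geq 1/2 + \delta$ gives $\Pr[E_i = 1] \leq 1/2 - \delta$. Since $h$ is symmetric about $1/2$ and increasing on $[0,1/2]$, this yields $H(E_i) \leq h(1/2-\delta) = h(1/2+\delta)$, and by independence
$$H(S) \;=\; \sum_{i=1}^k H(E_i) \;\leq\; k\, h\!\left(\tfrac{1}{2}+\delta\right).$$

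The asymptotic estimate $h(1/2+\delta) = 1 - O(\delta^2)$ is just Taylor's theorem at $p=1/2$: we have $h(1/2)=1$, $h'(1/2)=0$, and $h''(p) = -1/(p(1-p)\ln 2)$ is bounded on any compact subinterval of $(0,1)$, so $h(1/2+\delta) = 1 - \Theta(\delta^2)$ for $\delta$ bounded away from $1/2$. There is no serious obstacle in the argument; the essential point is that $k$ help bits are exactly enough to convey a one-bit correction per instance, and the entropy of a Bernoulli$(1/2-\delta)$ variable is precisely $h(1/2+\delta)$.
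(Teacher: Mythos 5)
Your proposal is correct and is essentially identical to the paper's own proof: both define the help bits as the per-instance correction bits $B(x_i)=L(x_i)\oplus C_n(x_i)$, decode by XOR-ing with $C_n$'s outputs, and bound $H(S)\le\sum_i H(B(X_i))\le k\,h(1/2+\delta)$ using the fact that each correction bit is Bernoulli with bias at least $\delta$ away from $1/2$. No differences worth noting.
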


\begin{proof} Given an instance $x$, let $B(x)=L(x)\oplus C_n(x)$ where $\oplus$ denotes summation modulo $2$. Then by the assumption $\Pr[B(X)=0]\geq 1/2 + \delta$ where $X$ is the random variable corresponding to the randomly chosen $x$. By the monotonicity of $h(\cdot)$ on $[1/2, 1]$ we conclude that $H(B(X))\leq h(1/2 + \delta)$.

Now given $k$ instances $x_1, \cdots, x_k$, consider the help bits 
$$s=\big(B(x_1), \cdots, B(x_k)\big).$$
Observe that using $s$ we can efficiently compute $L(x_1), \cdots, L(x_k)$ as follows: We first use $C_n$ to compute $(C_n(x_1), \cdots, C_n(x_k))$ and then XOR it with $s$ to recover $L(x_1), \cdots, L(x_k)$.
Moreover  the entropy of the random variable associated with $s$ is
\begin{align*}
H(S)&=\sum_{i=1}^k H( B(X_i))\leq k\cdot h(1/2 + \delta).
\end{align*}
We are done.
\end{proof}

Let $L$ be an arbitrary language and fix a distribution on its input instances as above. Let $\delta\geq 0$ be the largest number such that a randomly chosen $x$ can be correctly solved with polynomial size circuits with probability $1/2+\delta$. Now suppose that we want to solve $k$ randomly chosen instances. 
Theorem \ref{thm:mutualinformation} says that to solve these $k$ instance correctly we need help bits of entropy at lease 
$k-3 \delta k=k(1-\Theta(\delta))$. On the other hand, Proposition~\ref{thm:proposition} says that there are help bits with entropy $k h(1/2 + \delta)=k(1-\Theta(\delta^2))$ which given them we can solve the $k$ instances efficiently. 
Here the question is which of the two bounds are tight.
The following example shows that in some cases the lower bound is closer to the actual required entropy.

\begin{example}
\label{example:O(delta)}
Let $L$ be a random decision problem defined on instances of size $n$ as follows. Let $F$ be the lexicographically first $2\delta$ fraction of instances. 
 We let $L(x)=0$ if $x\in F$, and otherwise we choose $L(x)\in \{0,1\}$ 
uniformly at random.
Then with high probability over the random choice of $L$, no polynomial size circuit can solve the decision problem on more than $1/2 + \delta + 2^{-\Theta(n)}$ fraction of the instances.
On the other hand, we can solve $k$ uniformly random independent instances of the problem using help bits with entropy $\le k (1 - 2 \delta) + \lg_2 (k+1)$.
\end{example}

\begin{proof}
The first part, that the average-case complexity of the problem is high, can be shown by a counting argument similar to \cite{Shannon}.

For the second part we define the help bits as follows: 
For an instance $x$,
let $B(x)$ be a string of length at most 1, 
where $B(x)$ is the empty string when $x\in F$,
and $B(x)=L(x)$ otherwise.
Now let the help bits for $k$ instances $x_1, \ldots, x_k$ be $s = (B(x_1), \ldots, B(x_k))$.

Given $x_1, \ldots, x_k$ we can check which of $x_1, \ldots, x_k$ are among the $2 \delta$ lexicographically first fraction of instances.
We trivially know the answer to these instances.
For the rest of the instances, we can read off their answers from the help bits
(since we know for which values of $i$ the string $B(x_i)$ is empty.)

Let $S$ denote the random variable associated to $s$, and let $M$ be its length. Since $0\leq M\leq k$, and $M$ is a function of $S$ we have
\begin{align*}
H(S)&=H(S,M)\\
&= H(M) + H(S| M) \\
&\leq \lg_2(k+1) + \sum_{j=0}^k \Pr[M=j] H(S | M=j) \\
&\leq \lg_2(k+1) + \sum_{j=0}^k \Pr[M=j] j\\
&= \lg_2(k+1) + k(1 - 2\delta).
\end{align*}

\end{proof}

In the above example, we used the fact that there is a $1 - 2 \delta$ fraction hard-core set of instances 
and the rest of the instances are easy.
In general, by Impagliazzo's hard-core set lemma \cite{Impagliazzo},
such a hard-core set always exists.
However, we cannot utilize this hard-core set, 
since unlike the example above
there does not exist a general procedure to test which instances are in the hard-core set.

We now turn our attention to measuring the amount of help by its average length. As above, let $S$ be the random variable corresponding to help bits, and let $M$ be its length. Here the crucial observation is that with no loss of generality we may assume that $M\leq k$. Indeed if for some $x_1, \dots, x_k$ the length of $s=s(x_1, \dots, x_k)$ is larger than $k$, then we may replace $s$ with $(L(x_1), \dots, L(x_k))$.
There is a problem here if we do this conversion for one particular sequence: the solver should know that the $k$-bit help is $(L(x_1), \dots, L(x_k))$. For this reason, we should do the conversion for \emph{all} sequences whose help bit string has length at least $k$, so that the solver knows what to do whenever he sees such a sequence. Clearly these new help bits are as useful as $s$, and have a shorter length. Note that this procedure may increase the entropy of $S$ because in computing the entropy, it is only the probabilities assigned to sequences that matter, not their length. 

\begin{corollary} 
Use the notation of Theorem~\ref{thm:mutualinformation}. Assume that 
$$\mathbb E[M]\leq k - 3 k \delta-\lg(k+1),$$
We further assume that $M\leq k$. Then there exist polynomial size circuits that correctly solve a randomly chosen $x\in \B^n$
with probability
$\ge 1/2 + \delta$ without having access to the help bits.
\end{corollary}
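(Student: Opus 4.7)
The plan is to reduce this corollary directly to Theorem~\ref{thm:mutualinformation} by upper-bounding $H(S)$ in terms of $\mathbb{E}[M]$. The point is that Theorem~\ref{thm:mutualinformation} already handles the situation once the entropy of the help bits is at most $k - 3k\delta$, so it suffices to show that the hypothesis on the expected length forces $H(S) \leq k - 3k\delta$.

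First I would use the fact that $M$ is a deterministic function of $S$ (namely, its length), so that $H(S) = H(S,M) = H(M) + H(S \mid M)$. Since $M$ takes values in $\{0,1,\dots,k\}$, its entropy is bounded by $\lg(k+1)$. Next, I would split the conditional entropy according to the value of $M$: for any fixed $j \in \{0,\dots,k\}$, the random variable $S$ given $M = j$ is supported on $\{0,1\}^j$, so $H(S \mid M=j) \leq j$. Averaging gives
\begin{equation*}
H(S \mid M) = \sum_{j=0}^{k} \Pr[M=j]\, H(S \mid M = j) \leq \sum_{j=0}^{k} \Pr[M=j]\, j = \mathbb{E}[M].
\end{equation*}
Combining these two estimates,
\begin{equation*}
H(S) \leq \lg(k+1) + \mathbb{E}[M] \leq \lg(k+1) + \bigl(k - 3k\delta - \lg(k+1)\bigr) = k - 3k\delta,
\end{equation*}
which is exactly the hypothesis of Theorem~\ref{thm:mutualinformation}.

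Having verified the entropy bound, I would then simply invoke Theorem~\ref{thm:mutualinformation} to conclude the existence of the polynomial-size circuits solving a random instance with probability at least $1/2 + \delta$. There is no real obstacle here: the only subtlety is recognizing that bounding entropy by average length costs at most the $\lg(k+1)$ slack introduced by having to encode the length $M$ itself, which is precisely why the hypothesis of the corollary subtracts $\lg(k+1)$ from the average length. The restriction $M \leq k$ is used in two places: to bound $H(M)$ by $\lg(k+1)$, and to ensure $H(S \mid M = j) \leq j \leq k$ so that the alphabet bound in Theorem~\ref{thm:mutualinformation} ($|\Sigma| = n^{O(1)}$ with $\ell = O(\lg n)$) is respected.
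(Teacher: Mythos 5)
Your proof is correct and follows essentially the same route as the paper: decompose $H(S)=H(S,M)=H(M)+H(S\mid M)$, bound $H(M)\le\lg(k+1)$ and $H(S\mid M)\le\mathbb{E}[M]$ using $M\le k$, and invoke Theorem~\ref{thm:mutualinformation}. No issues.
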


\begin{proof} 
We compute
\begin{align*}
H(S)&= H(S, M)\\
&\leq H(M)+ H(S|M)
\\&\leq \lg(k+1)+ \sum_{j=0}^{k}\Pr[M=j]H(S|M=j)
\\&\leq \lg(k+1)+ \sum_{j=0}^k \Pr[M=j]j
\\&= \lg(k+1)+ \mathbb{E}[M]\\
&\leq k-3k\delta.
\end{align*}
The result then follows by applying Theorem~\ref{thm:mutualinformation}.
\end{proof}

\section{$k$-Membership Comparability for Large $k$}
\label{Sec:membership-comparability}
In this section we study $k$-membership comparability for non-constant values of $k$.
More precisely, we look at extensions of Theorem~\ref{ABGtheorem} for larger $k$.
In the following we first show that $k$-membership comparable languages do not necessarily have polynomial-size circuits.

\begin{example}
\label{ex:superpolynomialk}
For every function $k=k(n) = n^{\omega(1)}$, there exists a decision problem which is $k$-membership comparable but is not in $\P/\poly$.
\end{example}

\begin{proof}
Let $k'(n)=2^{\ell(n)}$ be the largest power of $2$ such that $k'\leq \min\{k-1, 2^n\}$.
Then $k' = n^{\omega(1)}$.
Let $f_n:\{0,1\}^{\ell} \rightarrow \{0,1\}$ be a sequence of functions that cannot be computed by circuits of size $O(k'/\ell)$, (see \cite{Shannon}). Now let $L$ be the language that for $x=(b_1,\dots, b_n)\in \B^n$ is given by $L(x) = f(b_1, \ldots, b_{\ell})$.
Then clearly $L$ is not in $\P/\poly$.

We now show that $L$ is $k$-membership comparable.
Given $x_1, \ldots, x_k \in \{0,1\}^n$,
there are $i \ne j$ such that $x_i$ and $x_j$ coincide on their first $\ell$ bits. Such indices $i,j$ can be found efficiently.
Then, we know that $(L(x_1), \ldots,L(x_k))$ is not equal to the bit-string
that has $0$ in position $i$, and $1$ in position $j$ (and arbitrary values in other positions).
\end{proof}

Whether Theorem~\ref{ABGtheorem} can be extended all the way to $k=\poly(n)$ is an open question.
In the following we argue that for super-logarithmic $k$,
no ``black-box proof" shows $k$-membership comparable problems are in $\P/\poly$.
Note that the proof of Theorem~\ref{ABGtheorem} for constant $k$ is a black-box proof.

\begin{theorem}
\label{th:blackbox}
For  $k(n) = \omega(\lg n)$
we cannot show in a black-box way that $k$-membership comparability of a decision problem $L$ implies $L\in\P/\poly$.
In other words, there exist a language $L$ and an oracle $O$ that given $x_1, \ldots, x_k \in\{0,1\}^n$ outputs a bit-string $(b_1, \ldots, b_k) \ne (L(x_1), \ldots, L(x_k))$, but no algorithm can compute $L$ using oracle $O$ in $\poly(n)$ time even given $\poly(n)$ bits of advice (that can depend on both $L$ and $O$).
\end{theorem}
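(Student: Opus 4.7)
The plan is to prove the theorem by the probabilistic method: I will draw $L$ and $O$ at random and show that almost surely $L \notin \P^O/\poly$. For each $n$ let $L|_n : \B^n \to \B$ be uniformly random; independently, for each $\vec x \in (\B^n)^k$ let $O(\vec x)$ be uniformly distributed over $\B^k \setminus \{(L(x_1),\ldots,L(x_k))\}$. The resulting $O$ is a valid oracle ruling out the true $k$-tuple on every query. It suffices to show that for every fixed deterministic poly-size oracle algorithm $C$ of size $s = \poly(n)$ and every fixed advice $a$ of length at most $\poly(n)$,
\[
\Pr_{L,O}\bigl[\,C^O(\cdot, a) = L|_n\,\bigr] \;\le\; 2^{-(1-o(1))\, 2^n};
\]
a union bound over the $2^{\poly(n)}$ possible advice strings (valid even when the advice is allowed to depend on $L$ and $O$) and then over the countably many algorithms $C$ gives $\Pr[L \in \P^O/\poly] = 0$, so a concrete pair $(L, O)$ with $L \notin \P^O/\poly$ exists.

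The first step is to identify the posterior distribution of $L|_n$ given the entire transcript $V$ of queries and responses produced by $C^O(\cdot, a)$ across all $2^n$ length-$n$ inputs. Writing $O(\vec x) = L(\vec x) \oplus \rho(\vec x)$ with $\rho(\vec x)$ uniform on $\B^k \setminus \{0^k\}$ (the oracle's randomness), a short Bayes calculation shows that
\[
\Pr[\,V = v \mid L|_n = \ell\,] \;=\; (2^k-1)^{-Q(v)}\; \mathbf{1}[\ell \text{ is consistent with } v],
\]
where $Q(v)$ is the number of distinct queries in $v$ and ``$\ell$ consistent with $v$'' means $(\ell(x_1),\ldots,\ell(x_k)) \neq r$ for every query-response pair $(\vec x, r)$ in $v$: the unique $\rho$ on the queried positions reproducing $v$ is $\rho(\vec x_i) = \ell(\vec x_i) \oplus r_i$, which has probability exactly $(2^k-1)^{-Q(v)}$. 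Hence the posterior of $L|_n$ given $V = v$ is uniform on the set $\mathcal C(v)$ of consistent $\ell$'s, and writing $\hat L(v)$ for the function $C^O(\cdot, a)$ outputs on view $v$,
\[
\Pr[\,\hat L(V) = L|_n\,] \;\le\; \mathbb{E}_V\!\left[\frac{1}{|\mathcal C(V)|}\right] \;=\; 2^{-2^n} \sum_v (2^k-1)^{-Q(v)},
\]
using $\Pr[V = v] = |\mathcal C(v)| \cdot 2^{-2^n} \cdot (2^k-1)^{-Q(v)}$.

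The second step is to bound $\sum_v (2^k-1)^{-Q(v)}$, which I interpret as an expectation under the auxiliary ``uniform-response'' distribution in which each fresh query receives an independent uniform response in $\B^k$. Under that distribution transcript $v$ has mass $(2^k)^{-Q(v)}$, so
\[
\sum_v (2^k-1)^{-Q(v)} \;=\; \mathbb{E}\!\left[\left(\tfrac{2^k}{2^k-1}\right)^{Q(V)}\right] \;\le\; \left(1 + \tfrac{1}{2^k-1}\right)^{s \cdot 2^n} \;\le\; \exp\!\left(\tfrac{s \cdot 2^n}{2^k - 1}\right),
\]
using $Q(V) \le s \cdot 2^n$. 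Because $k = \omega(\lg n)$ and $s = \poly(n)$, the ratio $s/2^k$ is $1/n^{\omega(1)} = o(1)$, so the exponent is $o(2^n)$ and the whole sum is at most $2^{o(2^n)}$. Combining the two displays gives $\Pr[C^O(\cdot, a) = L|_n] \le 2^{-(1-o(1))\, 2^n}$, as required.

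The main obstacle is obtaining a per-$(C, a)$ bound strong enough to absorb the $2^{\poly(n)}$-way union bound over advice: a direct Fano-style argument, using only that $I(L|_n; V) = O(s \cdot 2^{n-k}) = o(2^n)$, yields merely $\Pr \le o(1)$, which is too weak. The refinement above exploits the precise symmetry of our $O$---conditional on any view every consistent function is exactly equiprobable---to factor the bound through $\mathbb{E}[1/|\mathcal C(V)|]$, which then collapses to the clean exponential estimate. The hypothesis $k = \omega(\lg n)$ enters exactly in the inequality $s/2^k = o(1)$ that turns the sum into $2^{o(2^n)}$; if $k$ were only $O(\lg n)$ this step would break down, in keeping with the fact that for such $k$ an ABG-style black-box proof could potentially still go through.
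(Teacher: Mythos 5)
Your proof is correct and is essentially the probabilistic-method rendition of the paper's own counting argument: both rest on the observation that each oracle answer carries only $\lg\frac{2^k}{2^k-1}=\Theta(2^{-k})$ bits of information about $L$, so the $\poly(n)\cdot 2^n$ queries made over all length-$n$ inputs, together with $\poly(n)$ advice bits, cannot determine the $2^n$ bits of $L|_n$ once $k=\omega(\lg n)$. The paper counts acceptable pairs $(L_n,O_n)$ directly and arrives at the same estimate $2^{a(n)}\bigl(2^k/(2^k-1)\bigr)^{s(n)2^n} < 2^{2^n}$ that your posterior/change-of-measure computation yields.
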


\begin{proof}
Consider the restrictions of $L$ and $O$ on inputs of size $n$, i.e., $L_n=L\cap\{0,1\}^n$ and $O_n=O\cap(\B^n)^k$.
There are $2^{2^n}$ possibilities for $L_n$, and once $L_n$ is fixed there are $(2^k-1)^{2^nk}$ for the value of $O$ on $k$-tuples of $n$-bit strings.
In total, the number of acceptable  pairs $(L_n, O_n)$ is $2^{2^n} (2^k-1)^{2^nk}$.

Assume that for each pair $(L, O)$, there exists an algorithm that computes $L$ using $s(n) = n^{O(1)}$ number of queries to $O$ and $a(n) = n^{O(1)}$ bits of advice.
We will use this assumption to upper-bound the number of acceptable pairs $(L_n,O_n)$.

Fix a particular advice. Suppose that we want to compute $L(x)$ for all $x\in \B^n$. For each such instance we make $s(n)$ queries from $O$, so in total the number of all queries is $q(n) \le 2^n s(n)$.
Each of these queries (once $L$ is not fixed) has $2^k$ possible answers.
Therefore, for a fixed value of advice, there are $(2^k)^{q(n)}$ possible query answers for these $q(n)$ queries.
(Note that this is true even if the queries may be adaptive.) Now fixing these query answers, $L_n$ becomes fixed too. Then each of the remaining $2^{nk} - q(n)$ possible queries from the oracle have only $2^k-1$ valid answers.
Since there are $2^{a(n)}$ possible advices,
the number of pairs $(L_n, O_n)$ is at most
$$2^{a(n)} (2^k)^{q(n)} (2^k-1)^{2^{nk}-q(n)} = 2^{a(n)} 2^{\Theta(2^{-k}s(n)2^n)} (2^k-1)^{2^{nk}} < 2^{2^n} (2^k-1)^{2^{nk}},$$
where we used $2^{-k} s(n) = o(1)$.
This last inequality is in contradiction with the number of pairs $(L_n, O_n)$ computed above.
\end{proof}

\section{Open Problems}
We studied the computational benefit of help bits in some new settings, namely in randomized and average-case complexity frameworks. We also discussed $k$-membership comparability for superlogarithmic $k$.  

We mention a few open problems here:
In Theorem \ref{thm:mutualinformation}, 
it is open whether we can get non-trivial results when $\delta$ is close to $1/2$.
For that to work, we need to improve the assumption $H(S) \le k - 3k\delta$ to something like 
 $H(S) \le k - 2k\delta$.

Another open problem is classifying decision problems for which Theorem \ref{thm:mutualinformation} is tight vs.\ decision problems for which 
Proposition \ref{thm:proposition} is tight.
We already gave an example that shows Theorem \ref{thm:mutualinformation} is tight for some cases.

Finally, it is interesting to better understand the relationship between $k$-membership comparability and non-uniform complexity,
especially for logarithmic $k$.

\end{document}